\numberwithin{equation}{section}
\theoremstyle{plain}
\newtheorem{theorem}{Theorem}[section]
\newtheorem{lemma}[theorem]{Lemma}
\newtheorem*{claim}{Claim}
\theoremstyle{definition}
\newtheorem{definition}[theorem]{Definition}
\theoremstyle{remark}
\numberwithin{equation}{section}
\def\C{\mathcal{C}}
\def\D{\mathcal{D}}
\def\I{\mathcal{I}}
\def\L{\mathcal{L}}
\newcommand*{\N}{{\mathbb{N}}}
\newcommand*{\R}{{\mathbb{R}}}
\let\eps\epsilon
\let\phi\varphi
\DeclareMathOperator*{\pr}{\mathbb{P}}
\DeclareMathOperator*{\E}{\mathbb{E}}
\DeclareMathOperator{\unif}{Unif}
\DeclareMathOperator{\ind}{\mathbbm{1}}
\newcommand{\cube}[1]{\{\pm 1\}^{#1}}
\newcommand{\ignore}[1]{}
\newcommand*{\x}{\mathbf{x}}
\newcommand{\concept}{f}
\newcommand{\copt}{\concept^*}
\newcommand{\pup}{p_{\mathrm{up}}}
\newcommand{\pdown}{p_{\mathrm{down}}}
\newcommand{\nats}{\mathbb{N}}
\newcommand{\Unif}{\mathrm{Unif}}
\newcommand{\Sinp}{S_{\mathrm{inp}}}
\newcommand{\Sref}{S_{\mathrm{ref}}}
\newcommand{\Sadv}{S_{\mathrm{adv}}}
\newcommand{\Scln}{S_{\mathrm{cln}}}
\newcommand{\Sout}{S_{\mathrm{filt}}}
\newcommand{\Sinplabeled}{\bar{S}_{\mathrm{inp}}}
\newcommand{\Sadvlabeled}{\bar{S}_{\mathrm{adv}}}
\newcommand{\Sclnlabeled}{\bar{S}_{\mathrm{cln}}}
\newcommand{\Sremlabeled}{\bar{S}_{\mathrm{rem}}}
\newcommand{\Soutlabeled}{\bar{S}_{\mathrm{filt}}}
\newcommand{\coef}{\mathrm{coef}}
\title{Learning Constant-Depth Circuits in \\ Malicious Noise Models}
\author{
    Adam R. Klivans\thanks{\texttt{klivans@cs.utexas.edu}. Supported by NSF award AF-1909204 and the NSF AI Institute for Foundations of Machine Learning (IFML).} \\ UT Austin
	\and
    Konstantinos Stavropoulos\thanks{\texttt{kstavrop@cs.utexas.edu}. Supported by the NSF AI Institute for Foundations of Machine Learning (IFML) and by scholarships from Bodossaki Foundation and Leventis Foundation.} \\ UT Austin
    \and
    Arsen Vasilyan\thanks{\texttt{arsenvasilyan@gmail.com}.  Supported in part by NSF awards CCF-2006664, DMS-2022448, CCF-1565235, CCF-1955217, CCF-2310818, Big George Fellowship and Fintech@CSAIL. Part of this work was conducted while the author was visiting the Simons Institute for the Theory of Computing. Work done in part while visiting UT Austin.} \\ UC Berkeley
}
\date{}
\begin{document}

\maketitle

\begin{abstract}
    The seminal work of Linial, Mansour, and Nisan gave a quasipolynomial-time algorithm for learning constant-depth circuits ($\mathsf{AC}^0$) with respect to the uniform distribution on the hypercube.  Extending their algorithm to the setting of malicious noise, where both covariates and labels can be adversarially corrupted, has remained open.  Here we achieve such a result, inspired by recent work on learning with distribution shift. Our running time essentially matches their algorithm, which is known to be optimal assuming various cryptographic primitives.
    
    Our proof uses a simple outlier-removal method combined with Braverman's theorem for fooling constant-depth circuits.  We attain the best possible dependence on the noise rate and succeed in the harshest possible noise model (i.e., contamination or so-called ``nasty noise"). 
\end{abstract}

\ignore{
\begin{abstract}
    The problem of learning under malicious noise was defined more than two decades ago by Bshouty, Eiron and Kushilevitz \cite{BSHOUTY2002255}, and yet there are no known non-trivial algorithms for learning even the class of DNFs under the uniform distribution in this model. We propose a very simple algorithmic approach that yields quasi-polynomial time complexity upper bounds for circuits of any constant depth. In particular, we achieve runtime of $d^{O(k)}$, where $k = {(\log(s))^{O(\ell)}\log(1/\eps)}$, for learning depth-$\ell$ circuits of size $s$ over $\cube{d}$ up to excess error $\eps$, matching the best known results for the easier task of agnostic learning. We combine a refined version of standard outlier removal procedures with the low-degree sandwiching approximators of Braverman \cite{braverman2008polylogarithmic}.
\end{abstract}
}
\thispagestyle{empty}
\newpage
\setcounter{page}{1}
\section{Introduction}
In their famous paper, Linial, Mansour, and Nisan \cite{linial1993constant} introduced the ``low-degree" algorithm for learning Boolean functions with respect to the uniform distribution on $\cube{d}$.  The running time and sample complexity of their algorithm scales in terms of the Fourier concentration of the underlying concept class, and, using this framework, they obtained a quasipolynomial-time algorithm for learning constant-depth, polynomial-size circuits ($\mathsf{AC}^0$).

Prior work \cite{kalai2008agnostically} had extended their result to the agnostic setting, where the {\em labels} can be adversarially corrupted, but the marginal distribution on inputs must still be uniform over $\cube{d}$. Remarkably, there had been no progress on this problem in the last three decades for {\em malicious} noise models where {\em both} covariates and labels can be adversarially corrupted \cite{Valiant85,KearnsL93}.  

In this paper, we completely resolve this problem and obtain a quasipolynomial-time algorithm for learning $\mathsf{AC}^0$ in the harshest possible noise model, the so-called ``nasty noise" model of \cite{BSHOUTY2002255}.  We define this model below and refer to it simply as learning with contamination, in line with recent work in computationally efficient robust statistics (see e.g., \cite{DKBook}).

\begin{definition}[Learning from Contaminated Samples]\label{definition:learning-from-contaminated-samples}
    A set of $N$ labeled examples $\Sinplabeled$ is an $\eta$-contaminated (uniform) sample with respect to some class $\C \subseteq\{\cube{d} \to \cube{}\}$, where $N\in \nats$ and $\eta\in(0,1)$, if it is formed by an adversary as follows.
    \begin{enumerate}
        \item The adversary receives a set of $N$ clean i.i.d. labeled examples $\Sclnlabeled$, drawn from the uniform distribution over $\cube{d}$ and labeled by some unknown concept $\copt$ in $\C$. \item The adversary removes an arbitrary set $\Sremlabeled$ of $\lfloor\eta N\rfloor$ labeled examples from $\Sclnlabeled$ and substitutes it with an adversarial set of $\lfloor\eta N\rfloor$ labeled examples $\Sadvlabeled$.
    \end{enumerate} 
    Namely, $\Sinplabeled = (\Sclnlabeled\setminus \Sremlabeled)\cup \Sadvlabeled$. For the corresponding unlabeled set $\Sinp$, we say that it is an $\eta$-contaminated (uniform) sample.

    In this model, 
    the goal of the learner is to output (with probability $1-\delta$) a hypothesis $h:\cube{d}\to \cube{}$ such that $\pr_{\x\sim\Unif(\cube{d})}[h(\x) \neq \copt(\x)] \le 2\eta + \epsilon$. {The factor $2$ is known to be the best possible constant achievable by any algorithm \cite{BSHOUTY2002255}. 
    }
\end{definition}

Although there is now a long line of research giving computationally efficient algorithms for learning Boolean function classes in malicious noise models, these algorithms primarily apply to geometric concept classes and continuous marginal distributions, such as halfspaces or intersections of halfspaces with respect to Gaussian or log-concave densities \cite{kalai2008agnostically,klivans2009learning,awasthi2017power,diakonikolas2018learning,zhangmal}.  In particular, nothing was known for the case of $\mathsf{AC^0}$.


Our main theorem is as follows:

\begin{theorem}\label{theorem:main-ac0}
    For any $s,\ell, d \in \N$, and $\eps,\delta\in(0,1)$, there is an algorithm that learns the class of $\mathsf{AC}^0$ circuits of size $s$ and depth $\ell$ and achieves error $2\eta + \epsilon$, with running time and sample complexity $d^{O(k)}\log(1/\delta)$, where $k = {(\log(s))^{O(\ell)}\log(1/\eps)}$, from contaminated samples of any noise rate $\eta$.
\end{theorem}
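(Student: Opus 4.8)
The plan is to combine Braverman's low-degree sandwiching theorem with a moment-matching outlier-removal step. Write $N = d^{O(k)}\log(1/\delta)$ for the sample size, where $k = (\log s)^{O(\ell)}\log(1/\eps)$, and let $\Sinplabeled$ be the $\eta$-contaminated sample. Since the target $\copt$ is an $\mathsf{AC}^0$ circuit of size $s$ and depth $\ell$, Braverman's theorem supplies polynomials $p^-,p^+$ of degree $k$ with $p^-(\x)\le\copt(\x)\le p^+(\x)$ for all $\x\in\cube d$ and $\E_{\x\sim\Unif(\cube d)}[p^+(\x)-p^-(\x)]\le\eps/8$. In particular the midpoint $q=(p^++p^-)/2$ is a degree-$k$ polynomial with $\E_{\Unif}[|q-\copt|]\le\eps/16$, a good low-degree proxy for $\copt$; this is the object a degree-$k$ regression step should recover.

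First I would run outlier removal on the unlabeled part $\Sinp$. Lift each $\x$ to the vector of monomials $\chi_T(\x)$ over all $T$ with $|T|\le k$ (or $\le 2k$, for room); under $\Unif$ these are orthonormal, so the true degree-$\le k$ moments are trivial. With $N=d^{O(k)}\log(1/\delta)$ samples, a Chernoff-plus-union-bound argument shows that with probability $1-\delta$ the empirical degree-$\le k$ moments of the clean points $\Sclnlabeled$ — and of every subset of them of size $\ge(1-2\eta)N$, so in particular of the clean core $\Sclnlabeled\setminus\Sremlabeled$ — are within $\phi\ll\eps$ of the uniform moments. Now run a standard outlier-removal loop on $\Sinp$: repeatedly find a degree-$\le k$ polynomial whose empirical second moment on the current set exceeds its true value, remove the points on which it is largest, and stop when all degree-$\le k$ moments are $\phi$-close to uniform. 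Since the clean core is a valid stopping point, at most $\eta N$ points (up to lower-order terms) are removed; call the surviving labeled set $\Soutlabeled$. The upshot is that the empirical distribution on $\Sout$ is low-degree-indistinguishable from $\Unif$, so the adversary can no longer exploit the unboundedness of degree-$k$ polynomials on the cube to corrupt low-degree statistics.

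Next, run low-degree ($L_1$-)polynomial regression over degree-$k$ polynomials of bounded Fourier $\ell_1$-norm (so that the moment matching controls them pointwise in aggregate) on $\Soutlabeled$, and output $\sign(p-\theta)$ for the returned polynomial $p$ and a suitable $\theta\in[-1,1]$ (randomizing over $\theta$, then derandomizing). In the analysis: on the retained clean points the labels equal $\copt$, and uniform convergence over the $d^{O(k)}$-dimensional lifted class transfers empirical disagreement to disagreement under $\Unif$; on the retained adversarial points, which are at most an $\eta$-fraction of $\Soutlabeled$, a bounded (clipped/ramp) loss contributes at most $\eta$ per point while the moment matching bounds the aggregate effect of the corrupted covariates on the regression objective; and $q$, feasible by the Fourier-norm bound, certifies that the optimum of the objective is essentially that of $\copt$ itself. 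Finally, comparing — as in the classical analysis of the nasty-noise model — the output's empirical disagreement with $\copt$ on the clean sample against that of $\copt$ (which is $0$), and noting that the adversary's $\eta N$ deletions and $\eta N$ insertions each shift this count by at most $\eta N$, yields error $2\eta+\eps$. The running time and sample complexity are dominated by operating in the $d^{O(k)}$-dimensional lifted space.

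The main obstacle is obtaining the optimal constant $2$ rather than a larger constant or a $\sqrt{\eta}$-type bound. A convex regression surrogate is unbounded on the hypercube, and even after outlier removal it only limits the corrupted points' influence by $O(\sqrt{\eta})$, whereas the tight constant needs a loss that is genuinely bounded on the corrupted points; reconciling boundedness with computational tractability — via clipping / random-threshold rounding, its interplay with the moment-matching guarantee, and careful bookkeeping that uses $\eta\le 1/4$ (otherwise a trivial hypothesis already achieves error $\le 2\eta+\eps$) — is the delicate part. The outlier-removal analysis (few clean points deleted while all low-degree moments are certifiably matched) and checking that Braverman's sandwiching polynomials have the needed quantitative properties are comparatively routine.
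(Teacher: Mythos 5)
The step that breaks is the moment-matching filter. You claim that with $N=d^{O(k)}$ samples, \emph{every} subset of the clean sample of size at least $(1-2\eta)N$ has all degree-$\le k$ moments within $\phi\ll\eps$ of the uniform moments, so that the clean core $\Scln\setminus\Srem$ is a valid stopping point and the filter removes only about $\eta N$ points. This is false once $\eta\gg\phi$: the adversary deletes $\eta N$ clean points of its choosing, and deleting points with $x_1=-1$ already shifts the empirical mean of $x_1$ on the surviving clean core by $\Theta(\eta)$. So when $\phi\ll\eta$ no large subset of the contaminated sample need satisfy your stopping criterion, and your procedure has no legal place to stop; if instead you relax to $\phi=\Omega(\eta)$, the guarantee is far too weak downstream. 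Indeed, to transfer $\E_{\x\sim\unif(\cube{d})}[\pup(\x)-\pdown(\x)]\le\eps/8$ to the filtered set through moment matching you need additive accuracy about $\eps/\|\pup-\pdown\|_{\coef}$, i.e.\ on the order of $\eps\, 3^{-k}d^{-k/2}$ (the paper bounds $\|p\|_{\coef}\le 3^k d^{k/2}$ for non-negative $p$ of small mean via hypercontractivity), not merely $\phi\ll\eps$; with $\phi=O(\eta)$ you fall back to an $O(\eta)$- or $\sqrt{\eta}$-type excess, exactly the bound the paper is trying to beat. The paper sidesteps this by demanding only a \emph{one-sided} guarantee after filtering: every non-negative degree-$k$ polynomial with $\E_{\unif}[p]\le\eps/8$ satisfies $\sum_{\x\in\Sout}p(\x)\le\eps N$. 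One-sidedness is what makes the requirement achievable under nasty noise (deleting clean points can only decrease such sums), and it is all that $\ell_1$-sandwiching needs, since $|f-\pdown|\le\pup-\pdown$ pointwise; your symmetric ``low-degree indistinguishability from uniform'' is both unattainable to the required accuracy and unnecessary.

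The second gap is the accounting behind the constant $2$. ``The clean core is a valid stopping point, hence at most $\eta N$ points are removed'' is a non sequitur (an iterative filter can stop elsewhere and overshoot), and even a bound of $\eta N$ total removals only yields error about $3\eta+\eps$: the final bookkeeping needs the stronger property that the number of \emph{clean} points removed by the filter is at most the number of \emph{adversarial} points removed. In the paper this is the heart of the outlier-removal lemma: in each non-terminating iteration an LP produces the worst feasible non-negative polynomial $p^*$, a threshold $\tau^*$ is chosen so that the mass of current points above $\tau^*$ is at least twice the mass of a fresh reference sample above $\tau^*$ plus a slack $\Delta$, and a VC argument over degree-$k$ PTFs transfers the reference estimate to the clean points, giving ``removed adversarial $\ge$ removed clean'' per iteration. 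You label precisely this part ``comparatively routine'' and locate the difficulty in the boundedness of the regression loss; in fact the learning step is plain KKMS $L_1$ regression on the filtered set (no clipping or $\eta\le 1/4$ case analysis is needed), and the delicacy lives entirely in the filter's one-sided expectation guarantee and its clean-versus-adversarial removal bookkeeping.
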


Our running time essentially matches the Linial, Mansour, and Nisan result, which is known to be optimal assuming various cryptographic primitives \cite{Kharitonov95}. 

More generally, we prove that any concept class $\C$ that admits $\ell_1$-{\em sandwiching polynomials} of degree $k$ can be learned in time $d^{O(k)}$ from contaminated samples.  Recent work due to \cite{goel2024tolerant} had obtained a similar result achieving the weaker bound of $O(\eta) + \epsilon$ for learning functions with $\ell_2$-sandwiching polynomials.
Crucially, it remains unclear how to obtain such $\ell_2$ sandwiching approximators for constant depth circuits \footnote{Braverman's celebrated result on $\mathsf{AC^0}$ \cite{braverman2008polylogarithmic} obtains only $\ell_1$-sandwiching.}, and so their result does not apply here. 

In 2005, Kalai et al. \cite{kalai2008agnostically} showed that $\ell_1$-approximation suffices for agnostic learning.  Here we complete the analogy for malicious learning, showing that $\ell_1$-sandwiching implies learnability with respect to contamination.


\paragraph{Proof Overview.}
The input set $\Sinplabeled$ is $\eta$-contaminated. This might make it hard to find a hypothesis with near-optimal error on $\Sinplabeled$. However, we are only interested in finding a hypothesis with error $2\eta+\eps$ on the clean distribution, which is structured (in particular, the marginal distribution on the features is uniform over $\cube{d}$). In order to take advantage of the structure of the clean distribution despite only having access to the contaminated sample, we make use of the notion of sandwiching polynomials:
\begin{definition}[Sandwiching polynomials]\label{definition:sandwiching-polynomials}
    Let $f:\cube{d} \to \cube{}$. We say that the ($\ell_1$) $\eps$-sandwiching degree of $f$ with respect to the uniform distribution over the hypercube $\cube{d}$ is $k$ if there are polynomials $\pup,\pdown:\cube{d}\to \R$ of degree at most $k$ such that (1) $\pdown(\x) \le f(\x)\le \pup(\x)$ for all $\x\in\cube{d}$ and (2) $\E_{\x\sim \unif(\cube{d})}[\pup(\x) - \pdown(\x)] \le \eps$.
\end{definition}
The sandwiching degree of size-$s$ depth-$\ell$ $\mathsf{AC}^0$ circuits is bounded by $k = (\log(s))^{O(\ell)}\log(1/\eps)$, due to the result of Braverman on fooling constant-depth circuits (see \Cref{theorem:sandwiching-ac0} from \cite{braverman2008polylogarithmic,tal:LIPIcs.CCC.2017.15,harsha2019polynomial}).
Suppose that $\bar S$ is a subset of $\Sinplabeled$ that preserves the expectations of low-degree and non-negative polynomials (e.g., $\pup-\pdown$) compared to the uniform distribution.
Under this condition, low-degree polynomial regression gives a hypothesis with near-optimal error on $\bar S$ (see \Cref{section:low-error}). 


We show in \Cref{lemma:outlier-removal} that a simple procedure that iteratively removes samples from $\Sinplabeled$ can be used to form such a set $\bar S$ (that preserves the expectations of non-negative, degree-$k$ and low-expectation polynomials) and, moreover, this procedure removes more contaminated points than clean points. The last property is important, because it implies that $\bar S$ is representative for the ground truth distribution, i.e., any near-optimal hypothesis for $\bar S$ will also have error $2\eta + \eps$ on the ground truth. 

This is possible because the only way the adversary can significantly increase the expectation of a non-negative polynomial $p$ is by inserting examples $\x$ where $p(\x)$ is unreasonably large compared to the typical values of $p$ over the uniform distribution. Our algorithm iteratively finds the non-negative polynomial $q$ with the largest expectation over a given set through a simple linear program and then removes the points $\x$ for which $q(x)$ is large.

Our iterative outlier removal procedure is inspired by prior work on TDS learning (Testable Learning with Distribution Shift) and PQ learning \cite{klivans2023testable,goel2024tolerant} as well as the work of \cite{diakonikolas2018learning} on learning geometric concepts from contaminated examples. Both of these works use outlier removal procedures that give bounds on the variance of polynomials rather than the expectation of non-negative polynomials and, instead of linear programming, they use spectral algorithms.

\section{Notation}

Throughout this work, when we refer to a set $S$ of examples from the hypercube $\cube{d}$, we consider every example in $S$ to be a unique and separate instance of the corresponding element in $\cube{d}$. Moreover, we denote with $\bar{S}$ the corresponding labeled set of examples in $\cube{d}\times\cube{}$. 

{Recall that }
polynomials over $\cube{d}$ 
{ are} functions of the form $p(\x) = \sum_{\I\subseteq[d]}c_p(\I) \prod_{i\in \I}x_i$, where $\x = (x_i)_{i\in[d]}$ and $c_p(\I)\in \R$. We denote with $\x^\I$ the quantity $\prod_{i\in \I}x_i$. We say that the degree of $p$ is at most $k$ if for any $\I\subseteq[d]$ with $|\I| > k$, we have $c_p(\I) = 0$. For a polynomial $p$, we denote with $\|p\|_{\coef}$ the $\ell_1$ norm of its coefficients, i.e., $\|p\|_\coef = \sum_{\I\subseteq[d]}|c_p(\I)|$.

\section{Removing the Outliers}\label{section:outlier-removal}

The input set $\Sinplabeled$ includes an $\eta$ fraction of contaminated examples. It is, of course, impossible to identify the exact subset of $\Sinplabeled$ that is contaminated. 
However, we show how to remove contaminated examples that lead to inflation of the expectations of low-degree non-negative polynomials, which we call ``outliers." We remove only a relatively small number of clean examples from $\Sinplabeled$, as we show in the following lemma.

\begin{lemma}[Outlier removal]\label{lemma:outlier-removal}
    Let $\Sinp$ be an $\eta$-contaminated uniform sample (see \Cref{definition:learning-from-contaminated-samples}) with size $N$. For any choice of the parameters $\epsilon,\delta\in (0,1)$, and $k\in \nats$, the output $\Sout$ of \Cref{algorithm:main} satisfies the following, whenever $N \ge C\frac{(3d)^{2k}}{\eps^2} \log(1/\delta)$, for some sufficiently large constant $C\ge 1$.
    \begin{enumerate}
        \item\label{case:monotonicity} With probability at least $1-\delta$, the number of clean examples in $\Sinp$ that are removed from $\Sout$ is at most equal to the number of adversarial examples that are removed from $\Sout$ (see \Cref{fig:contaminated-diagram}). Namely, $|(\Sinp\cap\Scln)\setminus\Sout| \le |\Sadv\setminus\Sout|$.
        \item\label{case:poly-bound} For any non-negative polynomial $p$ over $\cube{d}$ with degree at most $k$ 
        and $\E_{\x\sim\Unif_d}[p(\x)]\le \frac{\eps}{8}$, we have
        $
            \sum_{\x\in\Sout} p(\x) \le \epsilon N \text{ with probability at least }1-\delta.
        $
    \end{enumerate}
\end{lemma}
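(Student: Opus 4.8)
The plan is to analyze the iterative outlier-removal procedure, which I would realize as follows: maintain a working set $S$ initialized to $\Sinp$; in each round, solve the linear program that finds a degree-$\le k$ polynomial $q$ which is non-negative on all of $\cube{d}$, has $\E_{\x\sim\Unif_d}[q(\x)]\le 1$, and maximizes $\sum_{\x\in S}q(\x)$; if this optimum is at most $8N$, halt and output $\Sout:=S$, and otherwise remove each $\x\in S$ independently with probability $q(\x)/\max_{\x'\in S}q(\x')$ and repeat. Since the point attaining $\max_S q$ is removed with probability one in every non-halting round, the procedure halts within $N$ rounds. I would condition throughout on a single high-probability event about the clean sample, then obtain \Cref{case:poly-bound} directly from the halting rule and \Cref{case:monotonicity} from a per-round accounting.

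Let $\goodevent$ be the event that $\bigl|\tfrac1N\sum_{\x\in\Scln}\x^\I-\E_{\x\sim\Unif_d}[\x^\I]\bigr|\le (3d)^{-k}$ for every monomial $\x^\I$ with $|\I|\le k$. Since each $\x^\I$ is a $\pm1$ random variable under $\Unif_d$, Hoeffding's inequality and a union bound over the at most $(3d)^k$ such monomials give $\pr[\goodevent]\ge 1-\delta$ for $N$ as in the statement. The role of $\goodevent$ is to limit how much the clean points can contribute to a non-negative polynomial: since the monomial representation of a function on $\cube{d}$ is its Fourier expansion, every $q\ge 0$ of degree $\le k$ satisfies $|c_q(\I)|=|\E_{\Unif_d}[q(\x)\,\x^\I]|\le \E_{\Unif_d}[q]$, hence $\|q\|_\coef\le(3d)^k\,\E_{\Unif_d}[q]$, so on $\goodevent$, for every subset $A\subseteq\Scln$,
\[
    \sum_{\x\in A}q(\x)\ \le\ \sum_{\x\in\Scln}q(\x)\ \le\ N\bigl(\E_{\Unif_d}[q]+\|q\|_\coef\,(3d)^{-k}\bigr)\ \le\ 2N\,\E_{\Unif_d}[q].
\]
In particular \Cref{case:poly-bound} follows from the halting rule alone: for $p\ge 0$ of degree $\le k$ with $\E_{\Unif_d}[p]\le\eps/8$, either $p\equiv 0$ (nothing to prove) or $p/\E_{\Unif_d}[p]$ is feasible for the linear program, so the halting condition gives $\sum_{\x\in\Sout}p(\x)/\E_{\Unif_d}[p]\le 8N$, i.e.\ $\sum_{\x\in\Sout}p(\x)\le 8N\,\E_{\Unif_d}[p]\le\eps N$.

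The content is in \Cref{case:monotonicity}. Condition on $\goodevent$. In a non-halting round with working set $S$ and chosen polynomial $q$ (feasible, so $\E_{\Unif_d}[q]\le 1$), we have $\sum_{\x\in S}q(\x)>8N$ while $\sum_{\x\in S\cap\Scln}q(\x)\le 2N$ by the display above, hence $\sum_{\x\in S\cap\Sadv}q(\x)>6N$. Thus this round's conditional expected number of clean removals is $\sum_{\x\in S\cap\Scln}q(\x)/\max_S q\le 2N/\max_S q$, while its conditional expected number of adversarial removals is $\sum_{\x\in S\cap\Sadv}q(\x)/\max_S q>6N/\max_S q$ --- at least three times larger --- and since $\max_S q\le\|q\|_\coef\le(3d)^k$, the round's expected adversarial removals are at least $6N/(3d)^k\ge 6C\log(1/\delta)$. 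Summing over rounds, the total clean removals form a sum of conditionally independent $\{0,1\}$ variables whose conditional means add to some $\mu_{\mathrm{cln}}$, the total adversarial removals likewise with means summing to $\mu_{\mathrm{adv}}$, and either no round ever removes a point (so \Cref{case:monotonicity} holds as $0\le 0$) or $\mu_{\mathrm{adv}}\ge\max\{3\mu_{\mathrm{cln}},\,6C\log(1/\delta)\}$. A Bernstein-type martingale tail bound, applied once to each side with failure probability $\delta/2$, then yields with probability $\ge 1-\delta$ that the realized clean total is at most $\mu_{\mathrm{cln}}+O\!\bigl(\sqrt{\mu_{\mathrm{cln}}\log(1/\delta)}\bigr)+O(\log(1/\delta))$ and the realized adversarial total is at least $\mu_{\mathrm{adv}}-O\!\bigl(\sqrt{\mu_{\mathrm{adv}}\log(1/\delta)}\bigr)$; the two lower bounds on $\mu_{\mathrm{adv}}$ then make the former at most the latter after a short computation, which is exactly $|(\Sinp\cap\Scln)\setminus\Sout|\le|\Sadv\setminus\Sout|$.

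The step I expect to be the main obstacle is this final concentration argument. The mean-ratio bound only controls expectations, and I must rule out the realized clean-removal count overshooting the realized adversarial-removal count. This is precisely where the $\log(1/\delta)$ factor in the sample-complexity hypothesis is essential: it forces the expected number of adversarial removals to be $\Omega(\log(1/\delta))$ as soon as any removal takes place, giving the Bernstein bounds enough slack to absorb the $O(\sqrt{\mu\log(1/\delta)})$ fluctuations on both sides. A secondary technical point is that the number of rounds --- and hence the martingale's predictable quadratic variation --- is data-dependent, which I would handle with the stopping-time (uniform) version of Bernstein's inequality; one should also check that ties in $\argmax_S q$ do not disturb the guaranteed removal of at least one point in each non-halting round.
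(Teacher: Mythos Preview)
Your realization of the algorithm differs from the paper's in two coupled ways, and one of them is a genuine gap. You constrain the LP polynomial $q$ to be non-negative on all of $\cube{d}$; this is $2^d$ linear constraints (equivalently, deciding feasibility is co-NP-hard for degree $\ge 2$), so the procedure is not $d^{O(k)}$-time as the main theorem requires. More importantly, the constraint is load-bearing in your proof: the inequality $\sum_{\x\in S\cap\Scln}q(\x)\le\sum_{\x\in\Scln}q(\x)$ uses $q\ge 0$ on $\Scln\setminus S$, and $\Scln\setminus S$ contains $\Srem$, the clean points the adversary deleted, which the algorithm never observes. If you relax to ``$q\ge 0$ on $\Sinp$'' (or on any sample the algorithm can access), $q$ may be as negative as $-\|q\|_\coef$ on $\Srem$, and the bound $\sum_{S\cap\Scln}q\le 2N$ fails by up to $\eta N(3d)^k$. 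The paper resolves this by drawing an independent reference sample $\Sref$, constraining $q\ge 0$ and $\frac1N\sum_{\Sref}q\le\eps/4$ only on $\Sref\cup\Sinp$, and then---crucially---removing points by a \emph{deterministic threshold} $p^*(\x)>\tau^*$. Threshold sets are degree-$k$ PTFs, so VC uniform convergence transfers $\pr_{\Sref}[p^*>\tau^*]$ to $\pr_{\Scln}[p^*>\tau^*]$; this yields $|S^{(i)}_{r,\mathrm{cln}}|\le|S^{(i)}_{r,\mathrm{adv}}|$ deterministically in every round, eliminating your martingale step entirely.

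Two of your ingredients are nonetheless nice and worth keeping. The observation $|c_q(\I)|=|\E[q\,\x^\I]|\le\E[|q|]=\E[q]$ for globally non-negative $q$ gives $\|q\|_\coef\le d^k\E[q]$ in one line; the paper instead routes through hypercontractivity to get the slightly sharper $\|q\|_\coef\le d^{k/2}e^k\E[q]$. And your Part~2 argument (rescale $p$ by $\E[p]$ and invoke the halting rule) is cleaner than the paper's, which must separately verify the $\|p\|_\coef\le B$ and $\frac1N\sum_{\Sref}p\le\eps/4$ constraints. But to make your overall scheme efficient you would need to replace global non-negativity by a sample-based surrogate, and at that point the ``weighted random removal plus Bernstein martingale'' route no longer has an obvious advantage over the paper's ``threshold plus VC'' route, which gives the per-round clean/adversarial comparison for free.
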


\begin{algorithm}
\caption{Outlier removal through Linear Programming}\label{algorithm:main}
\KwIn{Set $\Sinp\subseteq \cube{d}$ of size $N$ and parameters $\eps\in (0,1)$, $B>0$ and $k \in \nats$}
\KwOut{Filtered set $\Sout \subseteq \Sinp$.}
Let $B = 3^k d^{k/2}$, $\Delta = \frac{\eps}{2B}$.
\\
$S^{(0)} \leftarrow \Sinp$\\
\For{$i = 0, 1, 2, \dots, N$}{
    Let $p^*$ be the solution of the following linear program (P) and $\lambda^* = \frac{1}{N}\sum_{\x\in S^{(i)}}p^*(\x)$. 
    \begin{equation}
    \left.
    \begin{cases}
        \max_{p}& \,\sum_{\x\in S^{(i)}}p(\x) \hspace{1em} \notag \\
        \text{ s.t.:}&\,\, p \text{ polynomial, }\deg(p)\le k \text{ and }\|p\|_{\coef} \le B \notag\\
        &\,\, p(\x) \ge 0, \text{ for all }\x\in\Sref\cup \Sinp \notag \\
        &\,\, \frac{1}{N}\sum_{\x\in \Sref}p(\x) \le \eps/4 \notag
    \end{cases}
    \right\} \tag{P}
    \end{equation}\\
\lIf{$\lambda^* \le \eps$}{output $\Sout \leftarrow S^{(i)}$ and terminate}
\Else{let $\tau^*\ge 0$ be the smallest value such that $\frac{|S^{(i)}|}{N}\pr_{\x\sim S^{(i)}}[p^*(\x) > \tau^*] \ge 2\pr_{\x\sim \Sref}[p^*(\x) > \tau^*]+\Delta$
    \\
    $S^{(i+1)} \leftarrow S^{(i)} \setminus \{\x\in S^{(i)}: p^*(\x) > \tau^*\}$
}
}
\end{algorithm}

Our \Cref{algorithm:main} is similar in spirit to outlier removal procedures that have been used previously in the context of learning with contaminated samples \cite{diakonikolas2018learning} and tolerant learning with distribution shift \cite{goel2024tolerant}: we iteratively find the non-negative polynomial with largest expectation and remove the examples that give this polynomial unusually large values. Here we focus on the expectations of non-negative polynomials, while in all previous works, the guarantees after outlier removal concerned the variance of arbitrary polynomials. In this sense, our guarantees are stronger, but only hold for non-negative polynomials. Our algorithm solves, in every iteration, one linear program (P) in place of the usual spectral techniques from prior work. 

The proof idea is that whenever there is a non-negative polynomial $p^*$ with unreasonably large expectation, there have to be many outliers that give unusually large values to $p^*$. By removing all the points where $p^*$ is large, we can, therefore, be confident that we remove more outliers than clean examples (part 1 of \Cref{lemma:outlier-removal}). When the algorithm terminates, all non-negative polynomials with low expectation under the uniform distribution, will also have low expectation under the remaining set of examples (part 2 of \Cref{lemma:outlier-removal}).

    For part 1, we analyze the non-terminating iterations and we show that for each clean point that is filtered out by the procedure, at least one adversarial point is filtered out as well. We first show that in such an iteration, there is a $\tau\in[0,B]$ such that $\frac{|S^{(i)}|}{N}\pr_{\x\sim S^{(i)}}[p^*(\x) > \tau] \ge 2\pr_{\x\sim \Sref}[p^*(\x) > \tau]+\Delta > 0$. This implies that in every non-terminating iteration at least one point is removed and, therefore, some iteration $i\le N$ will satisfy the stopping criterion and terminate (there are only $N$ points in total).

    \begin{claim}
        In any non-terminating iteration ({i.e. an iteration where} $\lambda^* > \eps$), there is $\tau^*\in [0,B]$ such that \[\frac{|S^{(i)}|}{N}\pr_{\x\sim S^{(i)}}[p^*(\x) > \tau^*] \ge 2\pr_{\x\sim \Sref}[p^*(\x) > \tau^*]+\Delta.\]
    \end{claim}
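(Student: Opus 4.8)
The plan is to establish the existence of $\tau^*$ by a layer-cake (tail-integral) representation of the two empirical expectations, followed by a short averaging/contradiction argument. Write $p^*$ for the LP optimum in this (non-terminating) iteration. First I would record that $p^*$ is uniformly bounded: since $\|p^*\|_\coef\le B$ and every monomial $\x^\I$ takes values in $\{\pm1\}$ on $\cube d$, we have $|p^*(\x)|\le\|p^*\|_\coef\le B$ for all $\x\in\cube d$; combining this with the feasibility constraint $p^*(\x)\ge0$ for $\x\in\Sref\cup\Sinp$ gives $0\le p^*(\x)\le B$ on $S^{(i)}\subseteq\Sinp$ and on $\Sref$. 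Define, for $\tau\ge0$,
\[
g(\tau)\;=\;\frac{|S^{(i)}|}{N}\pr_{\x\sim S^{(i)}}[p^*(\x)>\tau]\;-\;2\pr_{\x\sim\Sref}[p^*(\x)>\tau],
\]
so that the claim is exactly that $g(\tau^*)\ge\Delta$ for some $\tau^*\in[0,B]$.

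The key step is to integrate $g$ over $[0,B]$. Using the identity $\E[Y]=\int_0^B\pr[Y>\tau]\,d\tau$ for any random variable $Y$ supported on $[0,B]$ (applied to $Y=p^*(\x)$ under $\x\sim S^{(i)}$ and under $\x\sim\Sref$), and noting $\frac{|S^{(i)}|}{N}\E_{\x\sim S^{(i)}}[p^*(\x)]=\frac1N\sum_{\x\in S^{(i)}}p^*(\x)=\lambda^*$, I obtain
\[
\int_0^B g(\tau)\,d\tau\;=\;\lambda^*\;-\;2\E_{\x\sim\Sref}[p^*(\x)].
\]
Now I plug in the two remaining facts. The LP feasibility constraint $\frac1N\sum_{\x\in\Sref}p^*(\x)\le\eps/4$, together with $|\Sref|=N$, yields $2\E_{\x\sim\Sref}[p^*(\x)]\le\eps/2$; and since the iteration is non-terminating, $\lambda^*>\eps$. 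Hence $\int_0^B g(\tau)\,d\tau>\eps-\eps/2=\eps/2$.

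Finally I would conclude by contradiction: if $g(\tau)<\Delta$ for every $\tau\in[0,B]$, then $\int_0^B g(\tau)\,d\tau\le\Delta\cdot B=\frac{\eps}{2B}\cdot B=\frac{\eps}{2}$, contradicting $\int_0^B g>\eps/2$. Therefore $g(\tau^*)\ge\Delta$ for some $\tau^*\in[0,B]$, which is the claim; taking the smallest such $\tau^*$ (it exists because $\tau\mapsto\pr[p^*(\x)>\tau]$ is right-continuous) is precisely the choice made by \Cref{algorithm:main}, and since $g(\tau^*)\ge\Delta>0$ the removed set $\{\x\in S^{(i)}:p^*(\x)>\tau^*\}$ is nonempty, so each non-terminating iteration deletes at least one point.

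I do not expect a genuine obstacle: the heart of the argument is the one-line tail-integral identity. The only places requiring care are (i) cutting the layer-cake integral off at $B$, which relies on the coefficient-norm bound $\|p^*\|_\coef\le B\Rightarrow p^*\le B$ on the hypercube, and (ii) the normalization bookkeeping linking the $1/N$ in the LP's reference constraint to the empirical probability over $\Sref$ (i.e.\ that $|\Sref|=N$), which is exactly what makes $\Delta=\eps/(2B)$ come out as the right threshold.
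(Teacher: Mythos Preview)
Your proposal is correct and follows essentially the same argument as the paper: both proofs use the layer-cake identity $\E[Y]=\int_0^B\pr[Y>\tau]\,d\tau$ for $Y\in[0,B]$, combine the LP constraints $\|p^*\|_\coef\le B$ and $\frac1N\sum_{\x\in\Sref}p^*(\x)\le\eps/4$ with $\lambda^*>\eps$, and finish by the same contradiction showing the integral exceeds $\Delta B=\eps/2$. Your explicit packaging via $g(\tau)$ and the remarks on why the integral can be cut off at $B$ and on the normalization $|\Sref|=N$ are exactly the points the paper handles as well.
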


    \begin{proof}
    Suppose, for contradiction, that for all $\tau\in[0,B]$ we have 
    \[
        \frac{|S^{(i)}|}{N}\pr_{\x\sim S^{(i)}}[p^*(\x) > \tau] < 2\pr_{\x\sim \Sref}[p^*(\x) > \tau]+\Delta
    \]
    We may integrate over $\tau\in[0,B]$ both sides of the above inequality, since the corresponding functions of $\tau$ have finite number of discontinuities (at most equal to $|S^{(i)}|+|\Sref|$).
    \begin{equation}\label{equation:int-bound}
        \frac{|S^{(i)}|}{N} \int_{\tau = 0}^B\pr_{\x\sim S^{(i)}}[p^*(\x) > \tau]\, d\tau < 2\int_{\tau = 0}^B\pr_{\x\sim \Sref}[p^*(\x) > \tau]\, d\tau + \Delta B
    \end{equation}
    We will now substitute the integrals above with expectations, i.e., $\int_{\tau = 0}^B\pr_{\x\sim S^{(i)}}[p^*(\x) > \tau]\, d\tau = \E_{\x\sim S^{(i)}}[p^*(\x)]$ and $\int_{\tau = 0}^B\pr_{\x\sim \Sref}[p^*(\x) > \tau]\, d\tau = \E_{\x\sim \Sref}[p^*(\x)]$. We use the simple fact that for any non-negative random variable $X$ with values in $[0,B]$, we have $\E[X] = \int_{\tau = 0}^B \pr[X>\tau]\, d\tau$. 

    We first set $X = p^*(\x)$, where $\x\sim S^{(i)}$ and observe that (1) $p^*(\x) \ge 0$ for all $\x\in S^{(i)}$, and also that (2) $p^*(\x) \le \|p\|_\coef \le B$ for all $\x\in\cube{d}\supseteq S^{(i)}$, since $p^*$ satisfies $\|p\|_\coef \le B$ according to the constraints of (P) and $p^*(\x) = \sum_{\I\subseteq[d]}c_{p^*}(\I) \x^\I \le \sum_{\I\subseteq[d]}|c_{p^*}(\I)| \cdot|\x^\I| = \sum_{\I\subseteq[d]}|c_{p^*}(\I)| = \|p^*\|_\coef$, since $\x\in\cube{d}$ and therefore $|\x^\I| = 1$. This shows that $X \in [0,B]$ almost surely over $\x\sim S^{(i)}$. Using an analogous argument for $\x\sim \Sref$, we overall obtain the following.
    \begin{equation}\label{equation:int-to-exp}
        \int_{\tau = 0}^B\pr_{\x\sim S^{(i)}}[p^*(\x) > \tau]\, d\tau = \E_{\x\sim S^{(i)}}[p^*(\x)] \;\;\text{ and }\;\;\int_{\tau = 0}^B\pr_{\x\sim \Sref}[p^*(\x) > \tau]\, d\tau = \E_{\x\sim \Sref}[p^*(\x)]
    \end{equation}
    We may now substitute \eqref{equation:int-to-exp} in the inequality \eqref{equation:int-bound}, and use the fact that $\E_{\x\sim\Sref}[p^*(\x)]\le \eps/4$ (by the constraints of (P)) to conclude that
    \[
        \lambda^* = \frac{1}{N} \sum_{\x\sim S^{(i)}}p^*(\x) = \frac{|S^{(i)}|}{N} \E_{\x\sim S^{(i)}}[p^*(\x)] \le 2\eps/4 + \eps/2 = \eps
    \]
    We reached a contradiction, since $\lambda^*>\eps$, and, therefore, $\tau^*$ exists.
    \end{proof}

    We still need to show that whenever the procedure filters out clean examples, it also filters out an equal number of adversarial examples. Let $S_r^{(i)} = \{\x\in S^{(i)}: p^*(\x) >\tau^*\}$ be the set of points that are filtered out during iteration $i$. We can write $S_r^{(i)}$ as a disjoint union $S_{r,\mathrm{cln}}^{(i)} \cup S_{r,\mathrm{adv}}^{(i)}$, where $S_{r,\mathrm{cln}}^{(i)} = S_r^{(i)} \cap \Scln$ are the clean examples that are removed and $S_{r,\mathrm{adv}}^{(i)} = S_r^{(i)} \cap \Sadv$ are the adversarial examples that are removed.

    \begin{claim}
        With probability at least $1-\delta$, we have that for all non-terminating iterations, $|S_{r,\mathrm{cln}}^{(i)}| \le |S_{r,\mathrm{adv}}^{(i)}|$.
    \end{claim}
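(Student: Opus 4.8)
The plan is to reduce the claim to a single uniform‑convergence statement and then combine it with the property of $\tau^*$ established in the preceding claim. Fix any non‑terminating iteration $i$, with its linear‑program optimizer $p^*$ (so $\lambda^* > \eps$) and the resulting threshold $\tau^*$, and recall $S_r^{(i)} = \{\x \in S^{(i)} : p^*(\x) > \tau^*\} = S_{r,\mathrm{cln}}^{(i)}\sqcup S_{r,\mathrm{adv}}^{(i)}$. Since we want $|S_{r,\mathrm{cln}}^{(i)}|\le|S_{r,\mathrm{adv}}^{(i)}|$, it is equivalent to show $2|S_{r,\mathrm{cln}}^{(i)}|\le|S_r^{(i)}|$. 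For a lower bound on $|S_r^{(i)}|$, multiply the defining inequality of $\tau^*$ by $N$:
\[
    |S_r^{(i)}| \;=\; |S^{(i)}|\cdot\pr_{\x\sim S^{(i)}}[p^*(\x)>\tau^*] \;\ge\; 2N\,\pr_{\x\sim\Sref}[p^*(\x)>\tau^*] + \Delta N.
\]
For an upper bound on $|S_{r,\mathrm{cln}}^{(i)}|$, note that the algorithm only ever deletes points, so the clean part of $S^{(i)}$ is contained in the original i.i.d.\ clean set $\Scln$ (of size $N$); hence $|S_{r,\mathrm{cln}}^{(i)}| \le |\{\x\in\Scln : p^*(\x)>\tau^*\}| = N\,\pr_{\x\sim\Scln}[p^*(\x)>\tau^*]$ (this is just a subset inequality and needs no non‑negativity of $p^*$).

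Comparing the two bounds, it suffices to establish the one‑sided deviation $\pr_{\x\sim\Scln}[p^*(\x)>\tau^*] \le \pr_{\x\sim\Sref}[p^*(\x)>\tau^*] + \tfrac{\Delta}{2}$. The subtlety is that $p^*$ and $\tau^*$ are data‑dependent (indeed chosen adversarially by the LP, from a sample containing the adversarial points), so this must hold \emph{uniformly} over all degree‑$\le k$ polynomials $p$ and all thresholds $\tau$. This is a uniform‑convergence statement for the class of degree‑$k$ polynomial threshold functions: since $\{\x\mapsto\1[p(\x)>\tau] : \deg(p)\le k,\ \tau\in\R\} = \{\x\mapsto\1[q(\x)>0] : \deg(q)\le k\}$ and each such indicator is an affine threshold in the $\sum_{j=0}^k\binom{d}{j} = d^{O(k)}$ monomial features, this class has VC dimension $d^{O(k)}$. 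Both $\Scln$ and $\Sref$ are i.i.d.\ samples from $\Unif(\cube{d})$ — $\Scln$ is drawn before the adversary acts, and the uniform marginal is known, so $\Sref$ can be drawn freshly and independently — so the standard uniform‑convergence bound for bounded‑VC classes gives, with probability at least $1-\delta$ over the draw of $\Scln$ and $\Sref$, that $\sup_{p,\tau}|\pr_{\x\sim\Scln}[p(\x)>\tau] - \pr_{\x\sim\Unif_d}[p(\x)>\tau]| \le \Delta/4$ and likewise for $\Sref$, provided $N \gtrsim d^{O(k)}\,\Delta^{-2}\log(1/\delta)$; a short computation with $\Delta = \eps/(2B)$, $B = 3^kd^{k/2}$, and $\sum_{j\le k}\binom dj \le d^k$ in the relevant regime shows that the lemma's hypothesis $N \ge C(3d)^{2k}\eps^{-2}\log(1/\delta)$ is exactly what this requires (for $C$ large enough). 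The triangle inequality then yields the displayed one‑sided bound for every $p,\tau$, in particular for $p^*,\tau^*$; denote this good event $\event$.

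On $\event$, for every non‑terminating iteration $i$,
\[
    2|S_{r,\mathrm{cln}}^{(i)}| \;\le\; 2N\,\pr_{\x\sim\Scln}[p^*(\x)>\tau^*] \;\le\; 2N\,\pr_{\x\sim\Sref}[p^*(\x)>\tau^*] + \Delta N \;\le\; |S_r^{(i)}|,
\]
so $|S_{r,\mathrm{cln}}^{(i)}| \le |S_r^{(i)}| - |S_{r,\mathrm{cln}}^{(i)}| = |S_{r,\mathrm{adv}}^{(i)}|$, which is the claim. Note that $\event$ is a single event not indexed by $i$, so it applies to whatever $p^*,\tau^*$ arise in each round and the conclusion holds simultaneously for all non‑terminating iterations, with no further union bound. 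I expect the only real work to be the uniform‑convergence step: pinning down the PTF function class, bounding its VC dimension by the number of low‑degree monomials, and verifying that the lemma's sample‑size assumption is calibrated to push the deviation below $\Delta/4$ — everything else is just unwinding the definition of $\tau^*$.
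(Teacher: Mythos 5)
Your proof is correct and follows essentially the same route as the paper's: use the defining inequality of $\tau^*$ from the preceding claim, bound $|S_{r,\mathrm{cln}}^{(i)}|$ by the number of points of $\Scln$ with $p^*(\x)>\tau^*$, and invoke VC-based uniform convergence over degree-$k$ polynomial threshold functions to relate $\pr_{\x\sim\Scln}$ to $\pr_{\x\sim\Sref}$ within $\Delta/2$, uniformly so that a single good event covers all iterations. Your calibration of the sample-size requirement against the lemma's hypothesis matches the paper's as well.
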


    \begin{proof}
        By the previous claim, we know that $\tau^*$ (which defines the set $S_r^{(i)}$) exists and has the property that $\frac{|S^{(i)}|}{N}\pr_{\x\sim S^{(i)}}[p^*(\x) > \tau^*] \ge 2\pr_{\x\sim \Sref}[p^*(\x) > \tau^*]+\Delta$. 

        We first focus on the quantity $\pr_{\x\sim \Sref}[p^*(\x) > \tau^*]$, which is proportional to the number of reference examples that would be removed by the thresholding operation $p^*(\x)>\tau^*$. However, we are interested in the number of actual clean examples that would be removed. The reference examples can be shown to provide an estimate of the number of removed clean examples, through uniform convergence. In particular, the thresholding operation corresponds to a polynomial threshold function of degree at most $d^k$ and, therefore, by standard VC dimension arguments (and uniformly for all iterations) we have that $\pr_{\x\sim \Sref}[p^*(\x) > \tau^*] \ge \pr_{\x\sim \Scln}[p^*(\x) > \tau^*] - \Delta/2$, except with probability $\delta$ 
        , as long as the sample size is $N \ge C' \frac{d^k+\log(1/\delta)}{\Delta^2}$. This is because both $\Sref$ and $\Scln$ consist of $N$ i.i.d. samples from the uniform distribution. 
        
        Overall, we have that $\frac{|S^{(i)}|}{N}\pr_{\x\sim S^{(i)}}[p^*(\x) > \tau^*] \ge 2\pr_{\x\sim \Scln}[p^*(\x) > \tau^*]$. We can write the empirical probabilities in terms of the sizes of the removed sets to obtain the following, where we also use the fact that $|S_r^{(i)}| = |S_{r,\mathrm{cln}}^{(i)}| + |S_{r,\mathrm{adv}}^{(i)}|$ and that $|S_{r,\mathrm{cln}}^{(i)}|$ is at most equal to the number of clean examples that would be removed by the $i$-th filtering operation (some clean examples could already have been removed either by the adversary or by some previous iteration and these will not be contained in $S_{r,\mathrm{cln}}^{(i)}$).
        \begin{align*}
            &\frac{|S^{(i)}|}{N}\cdot\frac{|S_r^{(i)}|}{|S^{(i)}|} \ge 2 \frac{|S_{r,\mathrm{cln}}^{(i)}|}{N}
            \;\;\text{ or }\;\; |S_{r,\mathrm{cln}}^{(i)}| + |S_{r,\mathrm{adv}}^{(i)}| \ge 2|S_{r,\mathrm{cln}}^{(i)}| 
            \;\;\text{ or }\;\; |S_{r,\mathrm{adv}}^{(i)}| \ge |S_{r,\mathrm{cln}}^{(i)}|
        \end{align*}
        This concludes the proof of the claim.
    \end{proof}

    Overall, if we sum over $i\in[N]$, we obtain that the number of clean examples that are removed by the procedure is at most equal to the number of adversarial examples that are removed by the procedure.

    For part 2 of \Cref{lemma:outlier-removal}, we observe that $\sum_{\x\in\Sout} p(\x) \le \lambda^* N \le \eps N$, as long as $p$ satisfies all the constraints of the program (P). It suffices to prove the following claim.

    \begin{claim}
        Any non-negative polynomial $p$ with degree at most $k$ and $\E_{\x\sim \unif_d}[p(\x)] \le \eps/8$ satisfies all the constraints of the program (P) with probability at least $1-\delta$.
    \end{claim}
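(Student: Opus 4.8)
The plan is to verify the four constraints of (P) separately for an arbitrary non-negative polynomial $p$ of degree at most $k$ with $\E_{\x\sim\unif_d}[p(\x)]\le\eps/8$. Two are immediate: ``$p$ is a polynomial of degree at most $k$'' is the hypothesis, and ``$p(\x)\ge0$ for all $\x\in\Sref\cup\Sinp$'' holds deterministically, since $\Sref\cup\Sinp\subseteq\cube d$ and $p$ is non-negative on all of $\cube d$. So it remains to handle the coefficient-norm bound $\|p\|_\coef\le B$ (again a deterministic property of $p$ alone) and the reference-set average $\frac1N\sum_{\x\in\Sref}p(\x)\le\eps/4$ (the only place the randomness, and hence the failure probability $\delta$ and the sample-size requirement, enters).

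The coefficient bound is where the work is. The crude estimate $|c_p(\I)|=|\E_{\x\sim\unif_d}[p(\x)\x^\I]|\le\E_{\x\sim\unif_d}[p(\x)]\le\eps/8$ (using $p\ge0$ and $|\x^\I|=1$) together with the count $\sum_{j\le k}\binom dj\le 2d^k$ only yields $\|p\|_\coef\lesssim d^k$, which exceeds $B=3^kd^{k/2}$ for large $d$. To recover the right power of $d$ I would combine two ingredients: (i) Cauchy--Schwarz over the $\le 2d^k$ monomials of degree $\le k$, $\|p\|_\coef=\sum_{|\I|\le k}|c_p(\I)|\le\sqrt{2d^k}\,\|p\|_2$; and (ii) the low-degree norm comparison $\|p\|_2\le 2^{3k/2}\|p\|_1$ for $\deg p\le k$, obtained from hypercontractivity ($\|T_\rho p\|_2\le\|p\|_{1+\rho^2}$) together with $\|T_\rho p\|_2\ge\rho^k\|p\|_2$ for degree-$\le k$ functions and log-convexity of $L^t$-norms to interpolate $\|p\|_{1+\rho^2}$ between $\|p\|_1$ and $\|p\|_2$, taking $\rho=1/\sqrt2$. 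Since $p\ge0$, $\|p\|_1=\E_{\x\sim\unif_d}[p(\x)]\le\eps/8$, so
\[
\|p\|_\coef\ \le\ \sqrt{2d^k}\cdot 2^{3k/2}\cdot\frac{\eps}{8}\ =\ \frac{\sqrt2\,\eps}{8}\,(2\sqrt2)^k\,d^{k/2}\ <\ 3^k d^{k/2}\ =\ B,
\]
using $2\sqrt2<3$ and $\eps<1$.

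For the last constraint, once $\|p\|_\coef\le B$ is established we have $0\le p(\x)\le\|p\|_\coef\le B$ for all $\x\in\cube d$, so $\frac1N\sum_{\x\in\Sref}p(\x)$ is an average of $N$ i.i.d.\ samples in $[0,B]$ with mean at most $\eps/8$; Hoeffding's inequality gives $\frac1N\sum_{\x\in\Sref}p(\x)\le\eps/8+\eps/8=\eps/4$ except with probability $\exp(-N\eps^2/(32B^2))\le\delta$, which holds once $N\ge 32B^2\eps^{-2}\log(1/\delta)$, and this is implied by the hypothesis $N\ge C(3d)^{2k}\eps^{-2}\log(1/\delta)$ since $B^2=9^kd^k\le(3d)^{2k}$. (If one wants the bound simultaneously for all admissible $p$, replace Hoeffding by a union bound over the $\le 2d^k$ characters, using $\sup_p|\frac1N\sum_{\x\in\Sref}p(\x)-\E_{\x\sim\unif_d}[p(\x)]|\le\|p\|_\coef\cdot\max_{\I}|\frac1N\sum_{\x\in\Sref}\x^\I|$; the same sample complexity suffices up to the logarithmic term.) The one genuinely non-routine step is the norm comparison in (ii): without it the coefficient bound comes out a factor $\sqrt{d^k}$ too large to fit inside $B$. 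On the other hand the exponential base is forgiving here, since the $\eps/8<1/8$ prefactor absorbs any $2^{O(k)}$ slack, so even a lossy derivation of the comparison (e.g. the interpolation above) is enough.
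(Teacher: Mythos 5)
Your proof is correct and follows essentially the same route as the paper: the degree and non-negativity constraints are immediate, the coefficient bound comes from Parseval/Cauchy--Schwarz over the $\le d^{O(k)}$ monomials together with a $(2,1)$-type norm comparison for degree-$k$ polynomials applied to $\|p\|_1=\E_{\x\sim\unif_d}[p(\x)]\le\eps/8$, and the reference-set constraint follows from Hoeffding with range $[0,B]$ under the stated sample size. The only cosmetic difference is that you re-derive the low-degree comparison with constant $2^{3k/2}$ via hypercontractivity plus interpolation, whereas the paper directly invokes $(\E[p^2])^{1/2}\le e^k\E[|p|]$ (Theorem 9.22 of \cite{o2014analysis}); both constants fit comfortably inside $B=3^kd^{k/2}$.
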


    \begin{proof}
        The degree bound and non-negativity are satisfied directly by the definition of $p$. We now need to show that $\|p\|_\coef \le 3^k d^{k/2}$. Recall that $p(\x) = \sum_{\I\subseteq [d]}c_p(\I) \x^S$, where $c_p(\I) = 0$ for any $|\I| > k$ and $\|p\|_\coef = \sum_{\I\subseteq [d]} |c_p(\I)|$. By viewing $c_p$ as a vector with $\sum_{j=0}^k\binom{d}{j}\le d^k$ dimensions (assuming $2\le k\le d$), we have that $\|p\|_\coef = \|c_p\|_1 \le d^{k/2} \|c_p\|_2 = d^{k/2} (\E_{\x\sim \unif_d}[(p(\x))^2])^{1/2}$.

        Since the uniform distribution is $(2,1)$-hypercontractive (see Theorem 9.22 in \cite{o2014analysis}), we have
        \begin{equation}\label{equation:hypercontractivity}
            \E_{\x\sim \unif_d}[(p(\x))^2])^{1/2} \le e^k \E_{\x\sim \unif_d}[|p(\x)|]
        \end{equation}
        Recall that the polynomial $p$ is non-negative. This implies that $|p(\x)| = p(\x)$ for all $\x\in\cube{d}$ and therefore $\E_{\x\sim \unif_d}[|p(\x)|] = \E_{\x\sim \unif_d}[p(\x)] \le \eps/8$. Overall, we have
        \begin{equation}\label{equation:hypercontractivity-applied}
            \E_{\x\sim \unif_d}[(p(\x))^2])^{1/2} \le e^k \eps/8 \le 3^k
        \end{equation}
        Recall, now, that $\|p\|_\coef \le d^{k/2} (\E_{\x\sim \unif_d}[(p(\x))^2])^{1/2}$. We obtain the desired bound $\|p\|_\coef \le 3^k d^{k/2}$.

        It remains to show that with probability at least $1-\delta$, we have $\frac{1}{N}\sum_{\x\in \Sref}p(\x) \le \eps/4$. Consider the random variable $X = \frac{1}{N}\sum_{\x\in \Sref}p(\x)$, where $\Sref$ is drawn i.i.d. form $\unif_d$. We have that $\E[X] = \E_{\x\sim\unif_d}[p(\x)] \le \eps/8$. Moreover, $p(\x) \le \|p\|_\coef \le 3^k d^{k/2}$, for all $\x\in\cube{d}$ and, from a standard Hoeffding bound on the random variable $X$, we obtain that $\frac{1}{N}\sum_{\x\in \Sref}p(\x) \le \eps/4$ with probability at least $1-\exp(-\frac{\eps^2}{64N d^k 9^k})$. Due to the choice of $N \ge C'\frac{d^k 9^k}{\eps^2}\log(1/\delta)$, we have that the probability of failure is bounded by $\delta$, as desired.
    \end{proof}
    

\section{Finding a Low-Error Hypothesis}\label{section:low-error}

The outlier removal process of \Cref{lemma:outlier-removal} enables us to find a subset $\Sout$ of the input set such that all non-negative and low-degree polynomials with small expectation under the uniform distribution also have small empirical expectation under $\Sout$. Moreover, the number of clean examples removed to form $\Sout$ is smaller than the number of removed outliers (see \Cref{fig:contaminated-diagram}). We show that these two properties are all we need in order to learn constant-depth circuits with contamination (\Cref{definition:learning-from-contaminated-samples}).

In order to take advantage of \Cref{lemma:outlier-removal}, we will use two main tools. The first one is the following theorem originally proposed by \cite{kalai2008agnostically} to show that $\L_1$ polynomial regression implies agnostic learning for classes that can be approximated by low-degree polynomials.

\begin{theorem}[Learning through $\mathcal{L}_1$ polynomial regression \cite{kalai2008agnostically}]\label{theorem:kkms}
    Let $\D$ be any distribution over $\cube{d}\times \cube{}$ and $\C$ some class of concepts from $\cube{d}$ to $\cube{}$. If for each $f\in\C$ there is some polynomial $p$ over $\cube{d}$ of degree at most $k$ such that $\E_{\x\sim \D_\x}[|f(\x) - p(\x)|] \le \eps$, then there is an algorithm (based on degree-$k$ $\L_1$ polynomial regression) which outputs a degree-$k$ polynomial threshold function $h$ such that $\pr_{(\x,y)\sim \D}[y\neq h(\x)] \le \min_{f\in \C}\pr_{(\x,y)\sim \D}[y\neq f(\x)] + 2\eps$, in time $O(\frac{1}{\eps^2}) d^{O(k)} \log(1/\delta)$.
\end{theorem}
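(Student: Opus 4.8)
The plan is to implement the $\L_1$-polynomial-regression strategy: reduce to a finite empirical problem, fit a degree-$k$ polynomial $q$ to the labels in $\L_1$ by linear programming, and then round $q$ to a polynomial threshold function using the best threshold, whose quality we control by an averaging argument over a uniformly random threshold. Concretely, I would draw $m = O(\eps^{-2})\,d^{O(k)}\log(1/\delta)$ i.i.d.\ labeled examples $(\x_1,y_1),\dots,(\x_m,y_m)\sim\D$ and solve
\[
q^* \in \argmin_{q}\ \frac1m\sum_{i=1}^m |q(\x_i)-y_i| \quad\text{s.t.}\quad \deg(q)\le k,\ \ \|q\|_\coef\le B,
\]
which is a linear program in the $\le d^k$ monomial coefficients of $q$ (introduce slacks $z_i$ with $z_i\ge q(\x_i)-y_i$ and $z_i\ge y_i-q(\x_i)$), with $B=d^{O(k)}$ chosen so that the promised degree-$k$ approximator is feasible; for the marginals relevant to our application (uniform, or uniform on a large sample) this holds because hypercontractivity gives coefficient norm $d^{O(k)}$, exactly as in the proof of \Cref{lemma:outlier-removal}. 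Let $\bar q=\mathrm{clip}(q^*,[-1,1])$ and output $h(\x)=\sign(q^*(\x)-\theta^*)$ for a threshold $\theta^*\in(-1,1)$ selected below; since $\sign(q^*(\x)-\theta)=\sign(\bar q(\x)-\theta)$ for every $\theta\in(-1,1)$, the output $h$ is a genuine degree-$k$ polynomial threshold function.

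The first ingredient is that the LP value is small. Let $\opt=\min_{f\in\C}\pr_{(\x,y)\sim\D}[y\neq f(\x)]$, attained by $f^*$, and let $p$ be its degree-$k$ approximator with $\E_{\x\sim\D_\x}[|f^*(\x)-p(\x)|]\le\eps$. The triangle inequality gives
\[
\E_{(\x,y)\sim\D}[|p(\x)-y|]\ \le\ \E_{\x\sim\D_\x}[|p(\x)-f^*(\x)|]+\E_{(\x,y)\sim\D}[|f^*(\x)-y|]\ \le\ \eps+2\opt,
\]
using that $f^*,y\in\cube{}$ so $|f^*(\x)-y|\in\{0,2\}$ has mean $2\opt$. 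Since $p$ is feasible and the loss values $|q(\x)-y|$ are bounded by $B+1$ over the feasible set, a Hoeffding bound for $p$ plus uniform convergence over degree-$\le k$, coefficient-$\le B$ polynomials (pseudo-dimension $d^{O(k)}$) yields $\E_\D[|q^*(\x)-y|]\le 2\opt+\eps+O(\eps)$ with probability $1-\delta$; clipping only decreases $\L_1$ loss, so also $\E_\D[|\bar q(\x)-y|]\le 2\opt+\eps+O(\eps)$.

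The second, and crucial, ingredient is the rounding, which is what avoids an extraneous factor of $2$ on $\opt$. For any $(\x,y)$ with $y\in\cube{}$ and $\bar q(\x)\in[-1,1]$, a one-line computation gives $\pr_{\theta\sim\Unif(-1,1)}[\sign(\bar q(\x)-\theta)\neq y]=\tfrac12|\bar q(\x)-y|$; averaging over $\D$,
\[
\E_{\theta\sim\Unif(-1,1)}\!\left[\pr_{\D}[\sign(\bar q(\x)-\theta)\neq y]\right]=\tfrac12\,\E_\D[|\bar q(\x)-y|]\ \le\ \opt+\eps/2+O(\eps),
\]
so some $\theta^*\in(-1,1)$ achieves population $0/1$ error at most $\opt+\eps$ (after adjusting constants). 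As $\theta$ ranges over $(-1,1)$, the restriction of $h_\theta$ to the sample takes only $\le m+1$ distinct values, so it suffices to compute the empirical $0/1$ error of these $O(m)$ candidate threshold functions and return the best; a final uniform-convergence step over degree-$k$ PTFs (VC dimension $d^{O(k)}$) certifies that the returned $h$ satisfies $\pr_\D[y\neq h(\x)]\le\opt+2\eps$ with probability $1-\delta$. The running time is dominated by the LP and the $O(m^2)$ threshold search, i.e.\ $O(\eps^{-2})\,d^{O(k)}\log(1/\delta)$.

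The main obstacle is precisely this factor-$2$ bookkeeping: rounding $q^*$ to $\sign(q^*)$ directly only gives $\pr_\D[y\neq\sign(q^*(\x))]\le\E_\D[|q^*(\x)-y|]\le 2\opt+\eps$, which is too weak to hit $\opt+2\eps$, and the remedy is the clip-then-uniform-threshold rounding, whose exact identity $\pr_\theta[\sign(\bar q(\x)-\theta)\neq y]=\tfrac12|\bar q(\x)-y|$ halves the contribution of $\opt$. The only other point requiring care is that degree-$k$ polynomials have a priori unbounded range on $\cube{d}$, so the $\L_1$ regression needs the coefficient bound $\|q\|_\coef\le B$ (with $B=d^{O(k)}$, available via hypercontractivity as in \Cref{lemma:outlier-removal}) together with clipping to make all loss functions bounded, after which the standard VC/pseudo-dimension estimates cost only the stated $d^{O(k)}$ factor in the sample size.
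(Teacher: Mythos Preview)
The paper does not prove this theorem; it is quoted as a black box from \cite{kalai2008agnostically} and used as a tool in the proof of \Cref{theorem:main-ac0}. Your write-up is essentially the standard KKMS argument and is correct: $\L_1$ regression on a sample, clipping, and the uniform-random-threshold rounding identity $\pr_{\theta\sim\Unif(-1,1)}[\sign(\bar q(\x)-\theta)\neq y]=\tfrac12|\bar q(\x)-y|$ are exactly what is needed to land at $\opt+2\eps$ rather than $2\opt+\eps$, and your observation that $\sign(q^*(\x)-\theta)=\sign(\bar q(\x)-\theta)$ for $\theta\in(-1,1)$ legitimately makes the output a degree-$k$ PTF.

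Two minor remarks. First, the detour through a \emph{population} bound $\E_\D[|q^*(\x)-y|]$ via uniform convergence over coefficient-bounded polynomials is unnecessary: it is cleaner to keep everything empirical (empirical $\L_1$ of $q^*$ is at most that of $p$; Hoeffding for the single function $p$; clip; random-threshold rounding on the sample; pick the best empirical threshold; then invoke VC for degree-$k$ PTFs once at the end). This spares one uniform-convergence step. Second, you are right that the coefficient bound $\|p\|_\coef\le d^{O(k)}$ is not guaranteed by the theorem hypothesis for an \emph{arbitrary} marginal $\D_\x$, and that hypercontractivity supplies it in the uniform (and uniform-on-a-sample) case used in this paper; indeed, from $\E_{\unif}[|p|]\le 1+\eps$ one gets $\|p\|_\coef\le d^{k/2}(\E_{\unif}[p^2])^{1/2}\le d^{k/2}e^k(1+\eps)$ exactly as in the proof of \Cref{lemma:outlier-removal}. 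For the application to \Cref{theorem:main-ac0}, where the approximator is $\pdown$ (or $\pup$) and the marginal is uniform on $\Sout$, this is all that is required.
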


Our overall learning algorithm will first filter the input set of examples $\Sinplabeled$ using \Cref{algorithm:main} and then run the algorithm of \Cref{theorem:kkms} on the uniform distribution over the filtered set $\Soutlabeled$. All we need to show is that there is a low-degree polynomial $p$ with $\E_{\x\sim \Soutlabeled}[|f(\x) - p(\x)|] \le \eps$. This is ensured by combining part 2 of \Cref{lemma:outlier-removal} with the sandwiching approximators for constant-depth circuits originally proposed by \cite{braverman2008polylogarithmic} in the context of pseudorandomness.

\begin{theorem}[Sandwiching polynomials for $\mathsf{AC}^0$ \cite{braverman2008polylogarithmic,tal:LIPIcs.CCC.2017.15,harsha2019polynomial}]\label{theorem:sandwiching-ac0}
    Let $f:\cube{d} \to \cube{}$ be any $\mathsf{AC}^0$ circuit of size $s$ and depth $\ell$ and let $\eps\in(0,1)$. Then, there are polynomials $\pup,\pdown$ over $\cube{d}$, each of degree at most $k = (\log(s))^{O(\ell)} \cdot \log(1/\eps)$ such that (1) $\pup(\x) \ge f(\x) \ge \pdown(\x)$ for all $\x\in \cube{d}$ and (2) $\E_{\x\sim \unif_d}[\pup(\x) - \pdown(\x)] \le \eps$.
\end{theorem}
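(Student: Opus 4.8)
The plan is to derive this from Braverman's theorem that bounded independence fools $\mathsf{AC}^0$, using the standard linear-programming duality between $\ell_1$-sandwiching and fooling $k$-wise independent distributions. Fix $f$ and, for a parameter $k\in\N$, consider the linear program ``minimize $\E_{\x\sim\unif_d}[\pup(\x)]$ over real polynomials $\pup$ of degree at most $k$ subject to $\pup(\x)\ge f(\x)$ for every $\x\in\cube{d}$.'' This is a finite linear program --- its variables are the at most $2d^k$ coefficients of $\pup$, it is feasible (take $\pup$ a large constant) and bounded below (since $\pup\ge f$ pointwise forces $\E_{\unif_d}[\pup]\ge\E_{\unif_d}[f]\ge -1$), so strong duality holds --- and its optimal value equals $\max_\mu\E_{\x\sim\mu}[f(\x)]$, where the maximum is over probability distributions $\mu$ on $\cube{d}$ all of whose degree-$\le k$ Fourier coefficients agree with those of $\unif_d$: these are exactly the $k$-wise independent distributions. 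By Braverman's theorem --- with the improved parameters of Tal and Harsha--Srinivasan --- every such $\mu$ satisfies $|\E_{\x\sim\mu}[f(\x)]-\E_{\x\sim\unif_d}[f(\x)]|\le\eps/2$ once $k=(\log s)^{O(\ell)}\log(1/\eps)$; hence the LP value is at most $\E_{\unif_d}[f]+\eps/2$, which yields a degree-$k$ polynomial $\pup\ge f$ with $\E_{\unif_d}[\pup-f]\le\eps/2$. Running the symmetric program produces $\pdown\le f$ with $\E_{\unif_d}[f-\pdown]\le\eps/2$, and then $\pdown\le f\le\pup$ and $\E_{\unif_d}[\pup-\pdown]\le\eps$, as claimed.

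For the fooling bound invoked above, the historically first (and self-contained) route is Braverman's direct construction of the sandwiching polynomials, which I would reconstruct as follows. Take a Linial--Mansour--Nisan low-degree $\ell_2$-approximator of $f$ together with a Razborov--Smolensky-type low-$\R$-degree polynomial $h$ of degree $(\log(s/\delta))^{O(\ell)}$ that agrees with $f$ outside a set $H$ of tiny uniform measure $\le\delta$; the key step is to build a non-negative low-degree polynomial $E$ with $E\ge1$ on all of $H$ and $\E_{\unif_d}[E]\le\eps$ --- for instance as the average of squared differences $(h_{r}-h)^2$ over independent probabilistic-polynomial samples $h_r$, which is $\ge1$ on $H$ because the samples are integer-valued on $\cube{d}$ and, for $\x\in H$, disagree with $h(\x)$ with probability $\ge 1-\delta$, while having small expectation because any two samples disagree only on an $O(\delta)$-fraction. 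Then $\pup:=h+M E$ and $\pdown:=h-M E$, with $M$ an upper bound for $|h-f|$ on $\cube{d}$, satisfy $\pdown\le f\le\pup$ pointwise (outside $H$ since $h=f$ and $E\ge0$, on $H$ since $E\ge1$ dominates $|h-f|/M$) and $\E_{\unif_d}[\pup-\pdown]=2M\,\E_{\unif_d}[E]$, which is forced below $\eps$ by taking $\delta$ small. This route gives Braverman's original exponent $O(\ell^2)$; bringing it down to the $O(\ell)$ and the product form in the statement is exactly the contribution of the refined Fourier-analytic / switching-lemma analyses of Tal and Harsha--Srinivasan, which I would cite rather than re-derive.

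I expect the crux to be the \emph{one-sidedness} of $\pdown\le f\le\pup$. A two-sided low-degree $\ell_2$-approximation of $f$ (the truncated Fourier expansion) is cheap and stays bounded, but making the approximation one-sided forces a polynomial that is at least $1$ everywhere on the disagreement set $H$ with expectation $\le\eps$, and the only handle on $H$ is that it is the error set of a low-$\R$-degree approximator of an $\mathsf{AC}^0$ function --- and such approximators are unavoidably large-valued on their error sets (a low-degree real polynomial cannot ``round'' a large integer linear form back to $\{0,1\}$). Hence $M$, and therefore $\pup-\pdown$, is large on $H$: it is controlled in expectation but not in second moment, which is precisely why the paper can rely only on $\ell_1$-sandwiching of $\mathsf{AC}^0$ and why $\ell_2$-sandwiching for $\mathsf{AC}^0$ --- which would suffice for an $\ell_2$-based analysis --- remains open.
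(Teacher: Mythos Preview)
The paper does not prove this theorem; it is stated as a citation to Braverman, Tal, and Harsha--Srinivasan and then used as a black box in the proof of \Cref{theorem:main-ac0}. So there is no in-paper argument to compare against.

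That said, your sketch is a faithful reconstruction of the standard route. The LP-duality equivalence between degree-$k$ $\ell_1$-sandwiching and being fooled by all $k$-wise independent distributions is exactly the reduction one uses, and your account of Braverman's direct construction (LMN approximator plus a Razborov--Smolensky probabilistic polynomial, with a non-negative low-degree ``error detector'' $E$ dominating the indicator of the disagreement set) is accurate in outline. Your final paragraph on why only $\ell_1$- and not $\ell_2$-sandwiching falls out---because $M\cdot E$ is controlled in expectation but not in second moment---is also on point and matches the paper's own remark that $\ell_2$-sandwiching for $\mathsf{AC}^0$ is not known. One minor caveat: the precise degree bound $(\log s)^{O(\ell)}\log(1/\eps)$ with the multiplicative $\log(1/\eps)$ is due to the Harsha--Srinivasan refinement of Tal's analysis, not to Braverman's original argument, so you are right to cite rather than re-derive that step.
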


\begin{proof}[Proof of \Cref{theorem:main-ac0}]
Consider the polynomial $p = \pup-\pdown$, where $\pup,\pdown$ are the $(\eps/8)$-sandwiching polynomials of some circuit $f$ of size $s$ and depth $\ell$. Observe that $p$ is non-negative and $\E_{\x\sim\unif_d}[p(\x)] \le \eps/8$. Therefore, according to part 2 of \Cref{lemma:outlier-removal}, we have $\sum_{\x\in\Sout} p(\x) \le \eps N$. Since $\pup\ge f\ge \pdown$ we also have $\E_{\x\sim \Sout}[|f(\x)-\pdown(\x)|] \le \eps N/|\Soutlabeled|$. By \Cref{theorem:kkms}, we find $h:\cube{d}\to \cube{}$ with $\pr_{(\x,y)\sim \Soutlabeled}[y\neq h(\x)] \le \min_{f\in \C}\pr_{(\x,y)\sim \Soutlabeled}[y\neq f(\x)] + 2\eps N/|\Soutlabeled|$ or equivalently
\begin{align}\label{equation:opt-error}
    \sum_{(\x,y)\in \Soutlabeled}\ind\{y\neq h(\x)\} \le \min_{f\in \C}\sum_{(\x,y)\in \Soutlabeled}\ind\{y\neq f(\x)\} + 2\eps N
\end{align}

The error of the hypothesis $h$ on the set $\Sclnlabeled$ gives a bound on its error under the uniform distribution with high probability, due to classical VC theory, as long as $N \ge C'\frac{d^k+\log(1/\delta)}{\eps^2}$, because $h$ is a PTF of degree at most $k$. We can provide an upper bound for $\pr_{(\x,y)\sim \Sclnlabeled}[y\neq h(\x)]$ in terms of the sizes of the sets depicted in \Cref{fig:contaminated-diagram}. In particular, we give a high-probability upper bound on the number of mistakes that $h$ makes on $\Sclnlabeled$. 
\begin{enumerate}
    \item The points in $\Sclnlabeled$ that are removed by the adversary are not taken into account while forming $h$, so, in the worst case, $h$ classifies them incorrectly. This gives at most $|\Scln\setminus \Sinp|$ mistakes.
    \item Similarly, $h$ makes at most $|S_3|$ mistakes corresponding to the clean points that are removed during the outlier removal process.
    \item Finally, $h$ will make at most $|S_2|+2\eps N$ mistakes on $\Sout$, according to the inequality \eqref{equation:opt-error}, corresponding to the adversarially corrupted points that were not removed during the outlier removal process. In the worst case, all of these mistakes are made in the part of $\Sout$ that intersects $\Scln$.
\end{enumerate}


\begin{figure}[h]
    \centering
    \includegraphics[width=0.3\linewidth]{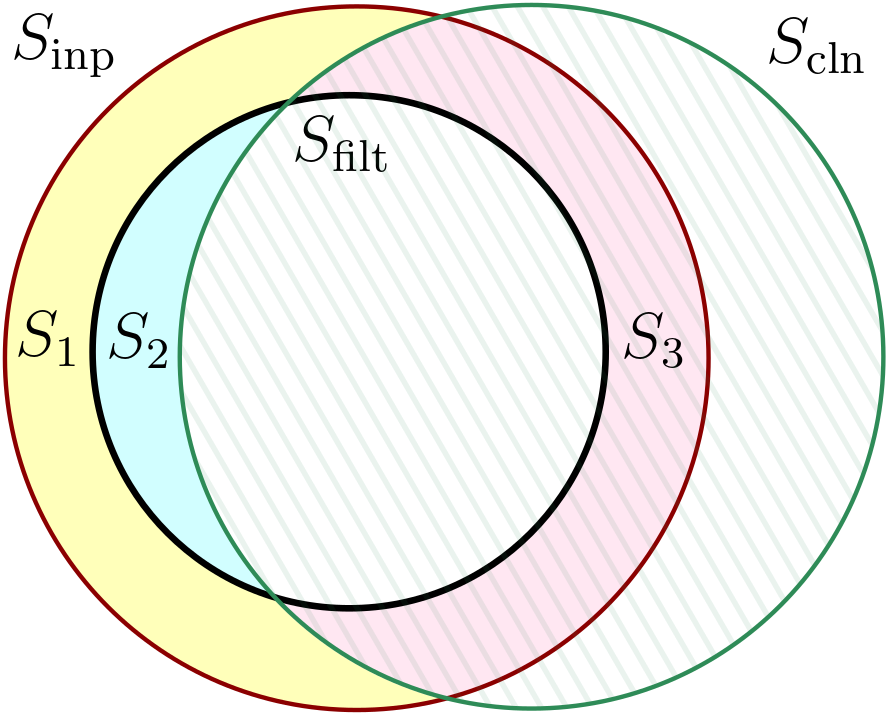}
    \caption{The diagram shows the input set of points $\Sinp$ (red circle), the clean points $\Scln$ (green circle), the output $\Sout$ (black circle) of \Cref{algorithm:main} and the sets $S_1$ (yellow region), $S_2$ (blue region), $S_3$ (pink region). The set $\Sinp$ consists of clean points, except from an $\eta$ fraction of adversarial points. $S_1$ contains the adversarial points that are filtered out by the outlier removal process and $S_2$ contains the adversarial points that were not removed and are kept in $\Sout$. $S_3$ contains the clean points that were filtered out during outlier removal. \Cref{lemma:outlier-removal} states that $|S_3| \le |S_1|$ w.h.p.}
    \label{fig:contaminated-diagram}
\end{figure}

The overall error is $\frac{1}{N} (|\Scln\setminus \Sinp| + |S_3| + |S_2|) + O(\eps)$. According to part 1 of \Cref{lemma:outlier-removal}, we have $|S_3| \le |S_1|$. Moreover, by \Cref{definition:learning-from-contaminated-samples}, we have that $|\Scln\setminus \Sinp| = |\Sinp\setminus\Scln| = |S_1|+|S_2| = \eta N$. The error bound we obtain overall is $2\eta + O(\eps)$, as desired.
\end{proof}


\paragraph{Acknowledgments.} We thank Mark Braverman and Sasha Razborov for useful conversations.

\bibliographystyle{alpha}
\bibliography{refs}



\end{document}